\newtheorem{definition}{Definition}%[section]
\newtheorem{theorem}{Theorem}%[section]
\newtheorem{corollary}{Corollary}[theorem]
\newtheorem{lemma}[theorem]{Lemma}
\def\@xfootnote[#1]{%
  \protected@xdef\@thefnmark{#1}%
  \@footnotemark\@footnotetext}
\begin{document}

\title{Explicit Instances of Quantum Tanner Codes}
\author{Rebecca Katharina Radebold}
\email{rrad0225@uni.sydney.edu.au}
\affiliation{Centre for Engineered Quantum Systems, School of Physics,
The University of Sydney, Sydney, New South Wales 2006, Australia}
\affiliation{Sydney Quantum Academy, Sydney, New South Wales, Australia}

\author{Stephen D. Bartlett}
\affiliation{Centre for Engineered Quantum Systems, School of Physics,
The University of Sydney, Sydney, New South Wales 2006, Australia}

\author{Andrew C. Doherty}
\affiliation{Centre for Engineered Quantum Systems, School of Physics,
The University of Sydney, Sydney, New South Wales 2006, Australia}

\begin{abstract}
We construct several explicit instances of quantum Tanner codes, a class of asymptotically good quantum low-density parity check (qLDPC) codes. The codes are constructed using dihedral groups and random pairs of classical codes and exhibit high encoding rates, relative distances, and pseudo-thresholds. Using the BP+OSD decoder, we demonstrate good performance in the phenomenological and circuit-level noise settings, comparable to the surface code with similar distances. Finally, we conduct an analysis of the space-time overhead incurred by these codes.

\end{abstract}
\maketitle
\section{Introduction} 

Quantum computers have the potential to deliver extraordinary computational power by harnessing unique quantum phenomena such as superposition and entanglement. However, quantum systems are very susceptible to noise, and scalable quantum computers will likely require mechanisms to systematically and reliably detect and correct errors during computations. To this end, Shor introduced the first quantum error correcting code~\cite{shor_9qubit_code}, demonstrating the possibility of using redundancy to encode and protect quantum information. This work was followed by Kitaev's introduction of the surface code~\cite{Kitaev_1997}, which has become the standard for high-threshold quantum error correction (QEC) and is the basis for the majority of experimental work studying QEC such as Ref.~\cite{Krinner_2022, Zhao_2022, google_surface_code}. While its excellent performance and locality make it desirable for experimental realization, the overhead required for the surface code with the error rates of current devices represents a significant barrier to implementing it as a scalable and efficient model for quantum error correction. 

Recently, there has been an increasing amount of research investigating more efficient alternatives for QEC, most notably the class of quantum low-density parity check (qLDPC) codes, much of which is summarized in Ref.~\cite{Breuckmann_LDPC}. More general than the surface code, this class of codes is characterized by sparse parity check matrices and non-vanishing encoding rates in return for reduced geometric locality. Early constructions include the Freedman-Meyer-Luo \cite{freedman_meyer_luo}, hypergraph product \cite{Tillich_hgp}, fiber bundle \cite{fiber_bundle}, lifted product \cite{lifted_product} and balanced product \cite{Balanced_prod} codes, each progressively achieving higher encoding rates and relative distances by utilizing mathematical tools from topology and graph theory. In 2021, Panteleev and Kalachev published the first construction of \textit{asymptotically good} qLDPC codes \cite{panteleev2020}, characterized by constant encoding rate and a minimum distance linear in the number of physical qubits. Since then, two more asymptotically good constructions have emerged, namely Dinur-Hsieh-Lin-Vidick codes \cite{dinur2021locallytestablecodesconstant} and quantum Tanner codes \cite{leverrier2022quantumtannercodes}, paving the way for qLDPC codes to become a promising alternative for QEC. 

While asymptotically good constructions of qLDPC codes provide the theoretical basis for scalable quantum error correction, near-term hardware is small in size, generally consisting of tens to hundreds of qubits, and exhibits relatively high physical error rates. To test the potential of qLDPC codes for use in the near term, then, there is a need to construct explicit instances of error-correcting codes and evaluate their performance in numerical simulations. Such numerical work allows us to determine whether new codes are suitable for implementation and how they would compare to other codes, including surface codes, in such implementations. This has been the focus of recent work examining explicit constructions and their performances, including for variations of the bicycle code constructions~\cite{Bravyi_BB_codes,ye_trapped_ions,lin2023quantumtwoblockgroupalgebra,koukoulekidis2_gb},   hypergraph product codes \cite{grospellier2019numericalstudyhypergraphproduct,higgot_hgp}, quantum Tanner codes \cite{guemard2025moderatelengthliftedquantumtanner} and others~\cite{lifted_product,scruby2024highthresholdlowoverheadsingleshotdecodable,Breuckmann_londe}.

In this paper, we construct a number of explicit instances of quantum Tanner codes featuring high encoding rates and relative distances, good performance in numerical simulations in relevant noise regimes, and reduced overheads when compared to surface codes. These codes range in size from 36 to 250 qubits and exhibit encoding rates around 20\%. Moreover, the smaller instances have weight-6 stabilizer generators and low space-time overhead when compared to surface codes of similar distances. We conduct numerical simulations with phenomenological and circuit-level noise using the BP+OSD decoder \cite{lifted_product, Roffe_LDPC_Python_tools_2022} to compute high pseudo-thresholds and low logical error rates in the $p \simeq 10^{-3}$ physical error rate regime.

The remainder of this paper is structured as follows. In Section II, we review quantum stabilizer and CSS codes before describing the quantum Tanner code construction as proposed by Leverrier and Zémor~\cite{leverrier2022quantumtannercodes}. Additionally, we review decoding strategies for our numerical simulations. In Section III, we provide the details of the construction of the explicit instances of our quantum Tanner codes, including their parameters and other underlying properties. In Section IV, we present our numerical results, beginning with the error models used, before describing simulation results and pseudo-thresholds. We conclude the section with a comparison of space-time overheads. Finally, we summarize our findings and discuss future work in Section V.

\section{Background}

In this section, we provide the necessary technical background related to quantum stabilizer and CSS codes, before delving into the details of the quantum Tanner code construction from \cite{leverrier2022quantumtannercodes}, including the left-right Cayley complexes and classical codes that form the basis of these codes. We also review the belief propagation and ordered-statistics decoding approach widely employed for numerical simulations with qLDPC codes.

\subsection{Quantum Stabilizer and CSS codes }
The state space of a system of $n$ qubits is a Hilbert space $\mathcal{H} = (\mathbb{C}^2)^{\otimes n}$. A quantum code encoding $k$ logical qubits is a subspace of $\mathcal{H}$ of dimension $2^k$. The most well-known method for describing quantum codes is provided by the stabilizer formalism, a group-theoretic framework introduced in \cite{gottesman_thesis}. In this formalism, a quantum code is given by the $+1$-eigenspace of an abelian subgroup $\mathcal{S}$ of the Pauli group $\mathcal{P} = \langle X_i, Y_i, Z_i | i \in [n]\rangle $, not containing $-I$. In other words, stabilizers $s \in \mathcal{S}$ are $n$-fold tensor products of Pauli operators which commute pairwise and satisfy $s|\phi \rangle =|\phi \rangle $ for any codeword $|\phi \rangle$, i.e. codewords are ``stabilized" by the elements of $\mathcal{S}$. The set of elements of $\mathcal{P}$ which commute with $\mathcal{S}$ is called the centralizer $\mathcal{C}(\mathcal{S})$. The cosets $\mathcal{C}(\mathcal{S})\backslash \mathcal{S}$ correspond to the set of logical operators of the quantum code, of which there are $4^k$. The smallest weight of any non-trivial logical operator is called the distance of the code, $d$, and roughly captures the error-correcting capabilities of the code.

Calderbank-Shor-Steane (CSS) codes are quantum error correcting codes given by two binary linear classical codes $C_X$ and $C_Z$ such that $C_X \subset C_Z^{\perp}$ \cite{cayley_graph}, \cite{steane}. From the perspective of their respective parity check matrices $H_X$ and $H_Z$, this condition is equivalent to $H_XH_Z^{T} = 0 \mod 2$. If the rows of $H_X$ and $H_Z$ are viewed as stabilizers, where a 1 in position $i$ of a row in $H_X$ $(H_Z)$ corresponds to the operator $X_i$ $(Z_i)$ and a 0 to $I_i$, then a CSS code is a stabilizer code whose stabilizer generators each consist entirely of either $X$ or $Z$ Pauli operators (aside from identity operators). 

\subsection{Quantum Tanner Code Construction}

We describe the construction of quantum Tanner codes as in \cite{leverrier2022quantumtannercodes} by reviewing left-right Cayley complexes, classical codes, and the properties of the resulting quantum CSS codes. 

\subsubsection{Left-Right Cayley Complexes}

The left-right Cayley complex is derived from the more familiar Cayley graph. A Cayley graph $\Gamma(V, E)$ is a graph-theoretical representation of a group $G$ through a fixed set of generators $S$ not containing the identity element \cite{cayley_graph}. The vertices $g \in V$ correspond to the elements of $G$. There exists an edge between two vertices $g$ and $g'$ if and only if there exists an $s \in S$ such that $ g\cdot s = g'$, where $(\cdot)$ represents the group operation. An edge is directed unless $S$ contains $s^{-1}$ as well. If $S$ is symmetric, that is $S = S^{-1}$, then the graph is undirected.

A left-right Cayley complex is a complex constructed from a Cayley graph by introducing two sets of vertices corresponding to group elements and edges based on left and right actions of generators \cite{dinur2021locallytestablecodesconstant}. Specifically, the vertices of a left-right Cayley complex are $g_i$ where $i=0,1$ and $g_i\in G$. The edges of the left-right Cayley complex connect the two sets of vertices generating a bipartite graph. The generating set $S$ of the Cayley graph is replaced by two symmetric sets of generators $A$ and $B$ that generate edges by left and right multiplication, respectively. 
\begin{definition}[Left-Right Cayley Complex]\label{def_LRCC}
    Let $G$ be a finite group and let $A, B \subseteq G$ such that $\langle A, B \rangle = G$. Further, let $A = A^{-1}$, $B = B^{-1}$. A left-right Cayley complex $\Gamma(G, A, B)$ corresponds to a graph with 
    \begin{enumerate}
        \item the vertex set \\ $V = V_0\cup V_1 =  \{g_i|  g_i \in G, i \in \{0, 1\} \}$
        \item the edge set $E = E_A \cup E_B$, where \\ $E_A = \{(g_i, (ag)_{j} )| \; a \in A, g_i \in G, i \neq j \}$ and \\$E_B = \{(g_i, (gb)_{j})  | \; b \in B, g_i \in G, i \neq j   \} $.
    \end{enumerate}
\end{definition}

This construction yields a 2D complex with faces, as 4-cycles of the graph, of the form
\begin{equation*}
    \{ g_i, (ag)_j, (gb)_j, (agb)_i \;| \; i, j \in \{0, 1\}, i \neq j\}. 
\end{equation*}
In order to ensure that the vertices opposite each other in the faces of the complex are distinct, elements of $A$ and $B$ must not be conjugates of each other.

\begin{definition}\label{tnc def}
    Let $G$ be a finite group and let $A, B \subseteq G$ such that $\langle A, B \rangle = G$. If 
    \begin{equation*}
        \forall a \in A, \; b \in B, \; g \in G, \hspace{0.9cm} ag \neq gb
    \end{equation*}
    the left-right Cayley complex $\Gamma(G, A, B)$ is said to satisfy the \textit{total non-conjugacy condition (TNC)}.
\end{definition}

Fulfillment of the total non-conjugacy condition ensures a 2D complex structure and that each vertex has degree $\Delta_A + \Delta_B$, where $\Delta_A = |A| $ and $\Delta_B = |B|$ \cite{dinur2021locallytestablecodesconstant}. For the sake of simplicity, we will usually choose $\Delta_A = \Delta_B = \Delta$.

\subsubsection{Classical Codes}

Classical linear block codes use redundancy to encode logical information and detect and correct errors. An $[n, k]$-code, that is, a classical code encoding $k$ bits of information using $n > k$ bits, is given by a binary $k \times n$ generator matrix $G$, whose rows are a set of codewords that span the code space. Alternatively, classical codes can be defined by their parity check matrix, $H$, an $(n-k) \times n$ binary matrix whose rows represent the parity checks of the code used to detect errors. These two matrices satisfy the constraint $GH^T = 0$. In the following we will specify a code $C$ by its generator matrix which we will also call $C$.

In order to define $X$- and $Z$-type parity checks on the LRCC for the quantum Tanner code, we require a pair of binary linear classical codes $(C_A, C_B)$. The code $C_A$ encodes $\rho \Delta_A$ logical bits in $\Delta_A$ bits, for some $0<\rho < 1$, and so its generator matrix has dimensions $\rho \Delta_A \times \Delta_A$. The code $C_B$ encodes $(1-\rho)\Delta_B$ logical bits in $\Delta_B$ bits. We construct the tensor codes $C_0 = C_A \otimes C_B$ and $C_1 = C_A^{\perp} \otimes C_B^{\perp}$, where $C_i^{\perp}$ represents the dual code, which can be obtained by switching the roles of the generator and parity check matrices. We recall that $\text{dim}(C_i \otimes C_j) = \text{dim}(C_i)\text{dim}(C_j)$ and $d(C_i \otimes C_j) = d(C_i)d(C_j)$ for the minimum distances of the codes, $C_i$, and $C_j$, respectively.

\subsubsection{Quantum Tanner Codes}

In order to construct a quantum Tanner code, we select a left-right Cayley complex $\Gamma(G, A, B)$ based on a finite non-abelian group $G$. We denote the set of faces of $\Gamma(G, A, B)$ incident to a given vertex $v$ as $Q(v)$ and the complete set of faces of the LRCC as $Q$. We note that $Q(v)$ is uniquely determined by a pair $(a, b)$ for every $v \in V$. The qubits of the quantum code are placed on the faces of the LRCC, so that the number of qubits of the code is equal $|Q|$. We choose two classical codes $C_A$ and $C_B$ as described above and define $C_0$ and $C_1 $ as previously mentioned. Since the number of columns of $C_A$ is $\Delta_A$ we can label the columns of $C_A$ with elements of $A$. Having chosen a fixed association of the columns with elements of $A$ we will use the notation that the codewords of $C_A$ are binary vectors $\beta_A\in \mathbb{F}_2^{A}$. Likewise we can associate the columns of $C_B$ with elements of $B$ and given such a mapping we will say that codewords of $C_B$ are binary vectors $\beta_B \subset \mathbb{F}_2^{B}$. The chosen correspondence of bits of two classical codes to group elements yields a labeling of the columns of the tensor codes $C_0$ and $C_1$ consisting of pairs $(a, b) \in A \times B$.

In order to construct stabilizer generators on $\Gamma(G, A, B)$ using the classical codes $C_0$ and $C_1$, we can define a function
\begin{equation*}
    \phi_v: A \times B \rightarrow Q(v), \; (a, b) \mapsto \{ v, av, vb, avb \},
\end{equation*}
which maps a pair of group generators to the face in $Q(v)$ that it uniquely defines. It is easy to show that $\phi_v$ is bijective. For each basis element $\beta \in \beta_0$ of $C_0$ we can associate a set of pairs of group generators $Z(\beta) = \{(a, b) | \beta_{(a, b)} = 1 \}$ corresponding to nonzero entries of $\beta$. Each generator of the $Z$ stabilizers of the quantum Tanner code is specified by a choice of vertex $v \in V_0$ and classical codeword $\beta$ such that the $Z$ stabilizer generator has support equal to the set of faces $\phi_v(Z(\beta))$. We can characterize this stabilizer generator by a binary vector $x \in \mathbb{F}_2^{Q}$ where the $|Q|$ qubits of the classical code are labelled by faces of the LRCC. Thus for a given $Z$-stabilizer generator of the quantum code $x_{|Q(v)}$ is equal to a basis element $\beta$ of $C_0$, based on a fixed ordering of the faces, and $0$ elsewhere. 
The resulting $\dim(C_0)|V_0|$ $Z$-type stabilizer generators correspond to codewords of $C_0$ locally at each vertex. We repeat the same process for vertices $v \in V_1$ and basis elements of $C_1$ to produce $\dim(C_1)|V_1|$ $X$-type stabilizers at each vertex of the partition. 

This construction naturally yields a valid CSS code with low-weight stabilizer generators. In particular, it is shown in \cite{leverrier2022quantumtannercodes} that all stabilizer generators of opposite type commute pairwise with one another, meaning the CSS code orthogonality constraint $C_X \subset C_Z^{\perp}$ is fulfilled. A family of codes is said to exhibit the LDPC property when the number of qubits involved in every stabilizer generator and the size of the support of each stabilizer generator are bounded above by a constant that does not grow with the size of the code. Due to the fact that $|Q(v)| = \Delta^2$ for all vertices $v \in V$, all stabilizers have maximum weight $\Delta^2$. Additionally, we can count the number of parity checks that each qubit is involved in by noting that each face of the LRCC is adjacent to four vertices and each vertex corresponds to either an $X$ stabilizer generator or a $Z$ stabilizer generator. These generators arise from the parity check matrices of $C_1$ and $C_0$ respectively, which have $\rho(1-\rho)\Delta^2$ rows. So each qubit is involved in a maximum of $4 \rho(1-\rho)\Delta^2$ stabilizer generators. Defining a family of quantum Tanner codes by fixing $\Delta$ and choosing groups $G$ such that $|G| \rightarrow \infty$, it is clear that any family of quantum Tanner codes exhibits the LDPC property.

An examination of the parameters of quantum Tanner codes in terms of the properties of the LRCCs and classical codes from which they are constructed proves they are also asymptotically good. By a simple counting argument, we have $n = \Delta^2|G| /2$. Counting the $X$- and $Z$-type stabilizers yields $k \geq |V_0| \dim(C_0) + |V_1|\dim(C_1)$. Recalling the dimensions of $C_A$ and $C_B$, it is easy to show that $\dim(C_0) = \dim(C_1) = \rho(1-\rho)\Delta^2$. This, in turn, means that
\begin{equation}
    k \geq 4\rho(1-\rho)n.
\end{equation}

Current lower bounds on the distances of quantum Tanner codes which scale linearly with $n$ hinge on the expansion properties of the left-right Cayley complexes and the distance and robustness of the classical codes underlying them. In particular, LRCCs, when considered as graphs, must be Ramanujan or nearly-Ramanujan. An $r$-regular graph $G$ is said to be Ramanujan if $\lambda_1 \leq 2 \sqrt{r-1}$, where $\lambda_1$ denotes the second largest eigenvalue (in absolute terms) of the adjacency matrix of the graph $G$. The Ramanujan property represents maximal spectral expansion, which can be thought of as the ideal balance between connectivity and edge-sparsity of a graph. The classical codes and their duals in these results are assumed to have sufficiently large minimal distances and the dual tensor codes $C_0^{\perp}$ and $C_1^{\perp}$ are assumed to exhibit a property called \textit{$\kappa$-robustness}. The former condition is common in product constructions such as \cite{Tillich_hgp} and \cite{Kovalev_GB_code}. The latter appears in \cite{dinur2021locallytestablecodesconstant} and \cite{panteleev2020}, and it has been shown that $\kappa$-robustness can be achieved with high probability when $C_A$ and $C_B$ are chosen randomly \cite{dinur2022goodquantumldpccodes} \cite{kalachev2023twosidedrobustlytestablecodes}. 

By requiring $C_A, C_B, C_A^{\perp}$ and $C_B^{\perp}$ to have distance at least $\delta \Delta$ for some $\delta>0$  and $C_0^{\perp}$ and $C_1^{\perp}$ to be $\kappa$-robust, is is shown in \cite{leverrier2022decodingquantumtannercodes} that the distance of the resulting quantum Tanner code can be bounded by 
\begin{equation}
    d \geq \frac{\delta^2 \kappa^2}{256\Delta}n,
\end{equation}
a tighter bound than the original one presented in \cite{leverrier2022quantumtannercodes}. Thus, under these conditions, the parameters of these codes scale as $[[n, \Theta(n), \Theta(n)]]$, meaning they are asymptotically good qLDPC codes.

\subsection{Belief Propagation and Ordered-Statistics Decoding}

Error syndromes resulting from stabilizer measurements are passed to a decoder that aims to provide a suitable correction. In the classical setting, this translates to finding a minimum-weight estimate of the error $e$ such that 
\begin{equation}
    He = s
\end{equation}
for the parity check matrix $H$ of the code and a syndrome $s$. For classical LDPC codes, the belief propagation (BP) decoder \cite{Kschischang_bp} uses bit-wise marginal probabilities to deliver the most likely minimum-weight error. More specifically, the decoder begins by computing a marginal distribution for each bit of the error vector
\begin{equation}
    P(e_i = 1) = \sum_{j \in [n]\backslash\{i\}} P(e_1, e_2, \ldots, e_i=1, \ldots, e_n | s)
\end{equation}
given the syndrome $s$. This is the probability, given $s$, that an error has occurred on the $i^\text{th}$ bit. This marginal distribution is called the \textit{soft decision} for the bit $e_i$. The \textit{hard decision} $\hat{e}$ is the error vector obtained by setting $\hat{e}_i = 1$ if $P(e_i=1) \geq 1/2$ and $0$ otherwise. For some codes, the soft decision can be computed efficiently via factorization informed by the structure of the code's factor graph. In each iteration of the decoder, the marginal distribution $P(e_i = 1)$ is updated for each bit via the factor graph factorization and the validity of corresponding hard decision vector $\hat{e}$ is verified via $H \cdot \hat{e} = s$. This process is repeated a maximum of $n$ times, where $n$ is the length of the code. If the equation $H \cdot \hat{e} = s$ is at any point satisfied, the algorithm has converged, and $\hat{e}$ is returned as the correction. 

In the context of quantum CSS codes with uncorrelated $X$ and $Z$ errors, it first appears as if this process can be applied to the $X$- and $Z$-components of the code separately. However, this overlooks the issue of quantum degeneracy, which is rooted in the uniquely quantum phenomenon of superposition.  In the quantum setting, the goal is to return the state to the code space, meaning that operations equivalent up to a stabilizer are equally valid. When the BP decoder is applied directly to quantum codes, multiple equivalent corrections are assigned high probabilities, leading to a scenario called split belief \cite{split_belief}. The BP decoder returns the sum of these corrections, which no longer returns the state to the code space and the decoding fails. 

To circumvent the issue of degeneracy and non-convergent BP decoding, Panteleev and Kalachev \cite{lifted_product} proposed using \textit{ordered statistics decoding} (OSD) as a post-processing measure. First introduced for classical codes in Ref.~\cite{osd_fossorier}, the OSD algorithm uses submatrix inversion to deliver estimates for errors that have occurred. More specifically, this means selecting a linearly independent subset of columns of the parity check matrix $H$, denoted $[I]$ and termed the \emph{information set}. The submatrix of $H$ consisting only of these columns $H_{[I]}$ can be inverted to give a solution $e_{[I]} = H^{-1}_{[I]}\cdot s$. Because each choice of $[I]$ yields a unique solution $e_{[I]}$, the issue of degeneracy can be avoided. 

When used as a post-processing step following BP decoding on the $X$- or $Z$-component of a quantum code, the soft decision can be used to inform the choice of the information set $[I]$. The indices of a set $[n]$, which correspond to the qubits of the code, are first ordered according to the soft information from most to least likely to have been flipped, yielding an information set $[L]$. The columns of the parity check matrix $H$ are reordered according to the new ordering given by $[L]$, denoted  $H_{[L]}$. The OSD step is applied to $H_{[I]}$, where $[I]$ is the set of the first $\text{rank}(H)$ columns of $H_{[L]}$, to yield $e_{[I]} = H^{-1}_{[I]} \cdot s$. The remainder of the bits, i.e., those which are not elements of $[I]$, are set to $0$ and the bits of the solution $e = (e_{[I]}, 0)$ are returned to their original ordering. This process, called OSD-0 post-processing, can be generalized using a greedy algorithm to assign highly likely values to the remainder of the qubits not in $[I]$ based on the soft information passed down from the BP step.  This is known as higher-order OSD. One variation of this method is called the ``combination sweep'' strategy \cite{roffe_BPOSD_package} and prioritizes low weight configurations for some number $\lambda \leq n-|[I]|$ of the remaining bits.

\section{Explicit Instances of Quantum Tanner Codes}\label{explicit constructions}

\begin{table*}[ht]
\begin{center}
\begin{tabular}{|| c | c | c| c | c | c ||}
 \hline
$[[n, k, d]]$ &  Group &  $\Delta$ & Stabilizer Weights  & Encoding Rate  $(k/n)$  & Relative Distance  $(d/n)$  \\ 
\hline

[[36, 8, 3]] &$ D_4$ &  3& 6 & 0.222 & 0.083   \\ 
\hline

[[54, 11, 4]]& $D_6$ & 3 & 6  & 0.204 & 0.074  \\ 
\hline

[[72, 14, 4]]& $D_8$ & 3 & 6 & 0.194 & 0.056 \\ 
\hline

[[200, 10, 10]]&  $D_8$ &  5 &6, 8, 9, 12& 0.05 & 0.05  \\ 
\hline

[[250, 10, 15]]& $D_{10}$ &5 & 6, 8, 9, 12& 0.04  &0.06  \\ 
\hline

\end{tabular}
\end{center}
\caption{Parameters of a collection of quantum Tanner codes. The parameters of the codes are listed in the first column, where $n$, $k$, and $d$ represent the number of physical qubits, or length, the number of logical qubits, or dimension, and distance of the codes, respectively. The relative distance and encoding rate allow for a comparison between codes of different sizes across families of codes. Stabilizer weights for smaller codes are limited to 6, while they double for larger codes.}
\label{params}
\end{table*}

In this section, we provide the details for our constructions of explicit instances of quantum Tanner codes, including the groups and $\Delta$ values selected for the left-right Cayley complexes and considerations related to the choice of classical codes. We examine the parameters and stabilizer weights of the explicit instances of these codes and examine the effects of the properties of the underlying LRCCs and classical codes on the properties of the resulting quantum codes.

In order to build explicit instances of quantum Tanner codes small enough for meaningful numerical simulations and potential applications on near-term hardware, the requirements for asymptotically good parameters must be balanced with more practical considerations. Spectral expansion properties are emphasized in Ref.~\cite{leverrier2022quantumtannercodes} and necessary for a lower bound on the distance that is linear in the length of the code $n$. Because LRCCs and Cayley graphs constructed from the same group and generating sets are strongly related (see Appendix \ref{Appendix0}), it is sufficient to consider the expansion properties of Cayley graphs, which are well-studied. The first explicit Ramanujan Cayley graphs described in the literature were constructed with projective special linear (PSL) groups in Ref.~\cite{LPS_ramanujan}. These, however do  not provide small LRCCs that are suitable for constructing codes small enough for numerical simulations. Hirano \textit{et al.}~\cite{HIRANO} show that Frobenius groups, when paired with generating sets of certain sizes, also yield Ramanujan Cayley graphs. Among these, dihedral groups $D_n$, which are of order $2n$ and represent the symmetries of an $n$-gon, prove most suitable based on their slow growth in $n$ and Ramanujan properties at scale. Specifically, we select the dihedral groups $D_4, D_6, D_8, D_{10}$ and random symmetric sets of generators of size $\Delta$ to generate LRCCs. These $\Delta$ values are limited by the cardinality of the group $G$, as they must satisfy $\Delta < |G|/2$, as well as certain group-theoretic properties. Further details can be found in Appendix \ref{AppendixA}. We note that similar considerations for group choice were made in Ref.~\cite{guemard2025moderatelengthliftedquantumtanner}.

We construct explicit instances of quantum Tanner codes by combining these left-right Cayley complexes based on dihedral groups with pairs of classical codes. Finding code pairs with appropriate dimensions severely restricted the search, particularly within the realm of code families such as Reed-Muller codes. Instead, we use classical codes obtained by randomly generating a matrix $P$ of dimensions $\rho\Delta \times \Delta(1-\rho)$ and constructing the generator and parity check matrices as $G = [\mathbb{I}_{\rho\Delta} | P]$ and $H = [P^T | \mathbb{I}_{\Delta(1-\rho)}]$. We computed the minimum distances of these codes and then tested for robustness, one of the properties central to the argument for asymptotically good parameters in the resulting quantum codes. Ultimately, robustness appeared to have little to no impact on the parameters of the resulting quantum Tanner codes, while large minimal distances were necessary for high-distance quantum codes. Further details related to the construction of these explicit instances can be found in Appendix \ref{appendixB}.

A selection of the quantum Tanner codes resulting from this construction is presented in Table~\ref{params}. The first column lists the parameters of the codes, which vary in the number of physical qubits from 36 to 250. The relative distances and encoding rates have been included to facilitate a comparison across codes and with other families of qLDPC codes. Ideally, error-correcting codes exhibit both a high relative distance and a high encoding rate, meaning, respectively, that they can correct high-weight errors and encode more logical qubits with less space overhead. There is a noticeable trade-off between encoding rate and relative distance; while these codes exhibit high encoding rates around 20\%, they have slightly lower relative distances than other instances of qLDPC codes such as the bivariate bicycle codes constructed in Refs.~\cite{Bravyi_BB_codes, shaw} and lifted quantum Tanner codes \cite{guemard2025moderatelengthliftedquantumtanner}, though the latter family of codes only encodes two logical qubits. We note here that all minimum distances are given by upper bounds computed by the \texttt{BP+OSD} decoder from the \texttt{LDPC} package \cite{roffe_BPOSD_package} and verified by the \texttt{GAP} package \texttt{QDistRnd} \cite{Pryadko_qdistrnd}.

Constructing explicit instances of quantum Tanner codes also allows us to  qualitatively examine the impact of the properties of the LRCC and classical codes on the parameters of the resulting quantum codes. While the expansion properties of the underlying left-right Cayley complexes and the robustness of the classical code pairs have no clear impact on the parameters of the resulting quantum Tanner codes, noticeable trends emerge related to the $\Delta$ values and minimal distances of the classical codes. Table \ref{elements} lists the largest distances achieved by our quantum Tanner codes as a function of the $\Delta$ values and minimal distances of the classical code pairs $(C_A, C_B)$, denoted $(d_A, d_B)$, with which they were constructed. Two clear patterns emerge from this table. Firstly, we observe that the distances of the quantum Tanner codes grow with the distances of the classical codes. This is the case for many product constructions of qLDPC codes, such as hypergraph product codes \cite{Tillich_hgp}, in which the distances of the quantum codes are bounded in some way by the distances of the underlying classical codes. Secondly, and perhaps more subtly, the odd $\Delta$ values yield quantum codes with larger distances, even when the classical codes have smaller distances. It is unclear whether this is related to the choice of group or if this is a more general pattern. 

\begin{table}
    \begin{center}

        \begin{tabular}{|| c | c | c| c | c | c |c |c |c ||}
            \hline 
            \diagbox[width=\dimexpr 1.5\textwidth/17+3\tabcolsep\relax, height=1.1cm]{($d_A, d_B$) }{\\$\Delta$}& \; 3 \; & \; 4 \; & \; 5 \; & \;  6 \;  \\ \hline
             (1,1)& 1 & 1& 3& 3 \\
            \hline
            (1,2)& 1& 1& 4 & 3 \\
            \hline
             (2,1)& 3 & 2 & 7 & 5 \\
            \hline
            (2,2)&3 & 2 & 10 & 6 \\
            \hline 
            (3,1) & 4 & 3 & 8 & 6 \\
            \hline
            (3,2) & 4 & 3 & 15 & 7 \\
            \hline
            (4,1) &  - & 3 & - & 7\\
            \hline
            (4,2) & - & 4 & - & 10\\

            \hline
            
        \end{tabular}
    \end{center}
    \caption{The maximal distances of the quantum codes constructed are listed as a function of the $\Delta$ value used in the construction of the LRCCs (top row) and the minimal distances of the classical code pairs, denoted $(d_A, d_B)$ (left column). Classical codes with higher distances appear to yield quantum Tanner codes with higher distances. Additionally, we see that these distances grow more quickly with odd $\Delta$ values. Codes constructed with $\Delta = 5$ and classical pairs with minimal distances $(d_A, d_B) \in \{(4,1), (4,2) \}$ did not yield valid CSS codes.}
    \label{elements}
\end{table}

\section{Numerical Results}

In this section, we introduce the error models used for numerical simulations and then investigate the pseudo-thresholds under these error models of the codes presented in the previous section. We discuss simulation results in the phenomenological and circuit-level noise settings and compare these with surface codes of similar distance. Finally, we analyze the space-time overheads of the codes to facilitate a comparison with their surface code counterparts as well as other qLDPC codes. All code and data can be found at \url{https://github.com/RebKatRad/qTanner.git}.

\subsection{Error Models}

In order to evaluate the error-correcting capabilities of the explicit instances of quantum Tanner codes in the context of fault-tolerant quantum computing, we perform numerical memory experiments which capture their performance in systems affected by different types of noise, including noise in the syndrome extraction process itself. We go beyond the standard code capacity setting, in which Pauli noise is applied to data qubits and perfect syndrome extraction is assumed, by introducing phenomenological and circuit level noise. In the simulations, data qubits are initialized in the $X$ ($Z$) basis and $N \in \mathbb{N}$ rounds of syndrome extraction are performed, wherein all stabilizers of the code are measured using ancilla qubits to yield a syndrome and $N$ is chosen depending on the type of experiment. Subsequently, all data qubits are measured in the $X$ ($Z$) basis in a final destructive round of measurement, simulating logical measurement, from which the stabilizer readout can be reconstructed. 

The phenomenological noise model is a heavily simplified model used to study the effect of both data qubit and measurement errors on the error correction capabilities of a code. In each round of syndrome extraction, each data qubit is independently subject to a $Z$ ($X$)-type error with probability $p$. In the first $N$ rounds of syndrome extraction, each stabilizer measurement independently reports an incorrect result with the same probability $p$. The final round of measurement is noiseless. Data-qubit errors, which have accumulated across the $N$ rounds, and measurement errors have equivalent effects and potentially result in a logical error, which is detected with perfect precision due to the lack of measurement noise in the final round of measurement. 

The circuit-level noise model is more detailed and aims to reproduce the noise exhibited by certain hardware models. Each stabilizer measurement is performed using an ancilla qubit together with two-qubit Clifford gates performed between each data qubit in the support of the stabilizer and the ancilla, and finally the ancilla qubit is measured.  In our simulations, we apply i.i.d.\ single- and two-qubit Pauli errors with probability $p$ following non-trivial single- and two-qubit Clifford gates, respectively, and after reset operations. Data qubits idling during time steps in which they are not being measured as well as idling ancilla qubits exhibit errors with probability $p/10$, simulating the common behavior that idling errors are far less probable than other types~\cite{Conrad_dodeco}. 

Simulating circuit-level noise requires generating a circuit and inserting errors of the appropriate intensity at right locations. To this end, we use the \texttt{LDPC} package (version 1)\footnote[1]{Note that we used the \texttt{css$\_$code$\_$memory$\_$circuit} functionality recently introduced by  Higgott that includes idling errors and measures both types of stabilizers unlike previous versions.} \cite{Roffe_LDPC_Python_tools_2022}  to generate generic circuits, using \texttt{Stim} \cite{gidney2021stim}, for the $X$ and $Z$ components of our CSS codes and specify the probabilities for each type of error. We note that these syndrome extraction circuits have not been optimized for quantum Tanner codes, a step we leave for future work.

\subsection{Simulation Results and Pseudo-Thresholds}

Simulations provide insight into the performance of explicit codes at noise levels realistic for near-term devices and facilitate a comparison with other qLDPC codes, including the surface code. We use the \texttt{BP+OSD} decoder \cite{roffe_BPOSD_package} to compute the logical error rates at a variety of physical error rates in the presence of both measurement and circuit-level noise. Logical error rates $L_X$ and $L_Z$ are computed separately for the $X$ and $Z$ components of the code, respectively, and the overall logical error rate is approximated by 
\begin{equation}\label{logical error rate def}
    p_L = (L_X + L_Z - L_XL_Z)/N.
\end{equation}
For each distance-$d$ code, we perform $N=d$ rounds of syndrome extraction before measuring the data qubits in the final round to determine the presence of a logical error. In our simulations, we use the \texttt{LDPC} package \cite{Roffe_LDPC_Python_tools_2022} to implement the min-sum variation of BP. We found that the performance of the decoder was very sensitive to the min-sum scaling factor $\alpha$, with some values degrading the quality of the results significantly, and we used $\alpha = 0.625$ as suggested in Ref.~\cite{lifted_product}. For the post-processing step, we use OSD-9 with the combination strategy, which outperformed all other OSD-$\lambda$ for $\lambda \in \{1, \ldots, 10\}$.

\begin{figure*}[ht!]
\centering

\includegraphics[width = 0.97\linewidth]{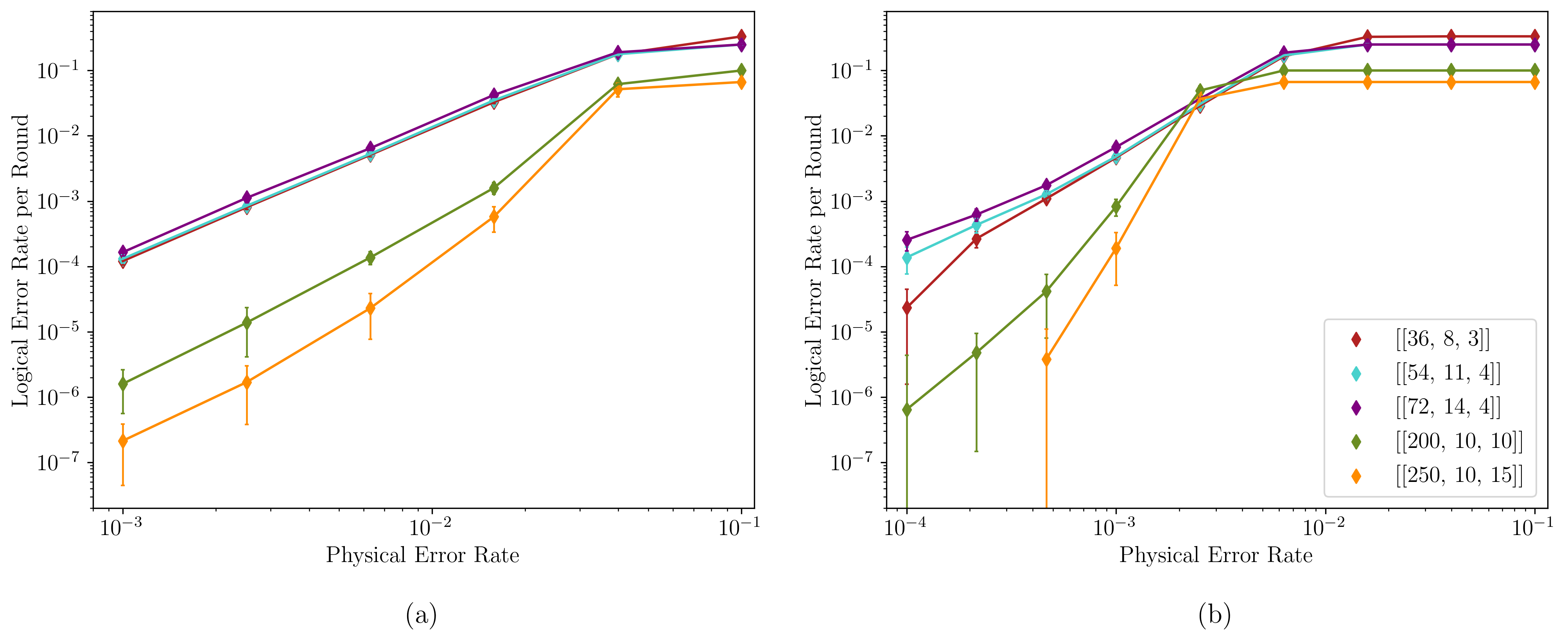}

\caption{Performance of our quantum Tanner codes from Table \ref{params} in simulations with (a) phenomenological noise, and (b) circuit-level noise. Logical error rate is plotted as a function of physical error rate for $N=d$ rounds of syndrome extraction, for code distance $d$. Error bars were computed for 95\% binomial proportion confidence intervals via $p_L = \hat{p_L} \pm \frac{1.645}{\sqrt{\eta}} \sqrt{\frac{\eta_s\eta_f}{\eta^2}}$, where $\eta$ represents the total number of shots and $\eta_s$ and $\eta_f$ the number of successes and failures, respectively. For the combined $p_L$ as computed in Eq.~\eqref{logical error rate def}, $\eta = \eta_x+\eta_z$. Here, $\eta_x, \eta_z \in [10^5, 10^9]$, depending on the number of errors encountered. Note the distinct ranges of physical error rates in plots (a) and (b).} 
\label{simulations}
\end{figure*}

In Figure \ref{simulations}(a) we see that, with measurement noise, all codes exhibit a logical error rate in the regime of $p_L$ ranging from $10^{-7}$ to $10^{-4}$ at a physical error rate of $p = 10^{-3}$, a level considered achievable by small-scale near-term hardware. Under circuit-level noise, Figure \ref{simulations}(b), logical error rates range from $10^{-4}$ to $10^{-2}$ at a physical error rate of $p = 10^{-3}$.

\begin{figure}[ht]
    \centering
    \includegraphics[width=0.99\linewidth]{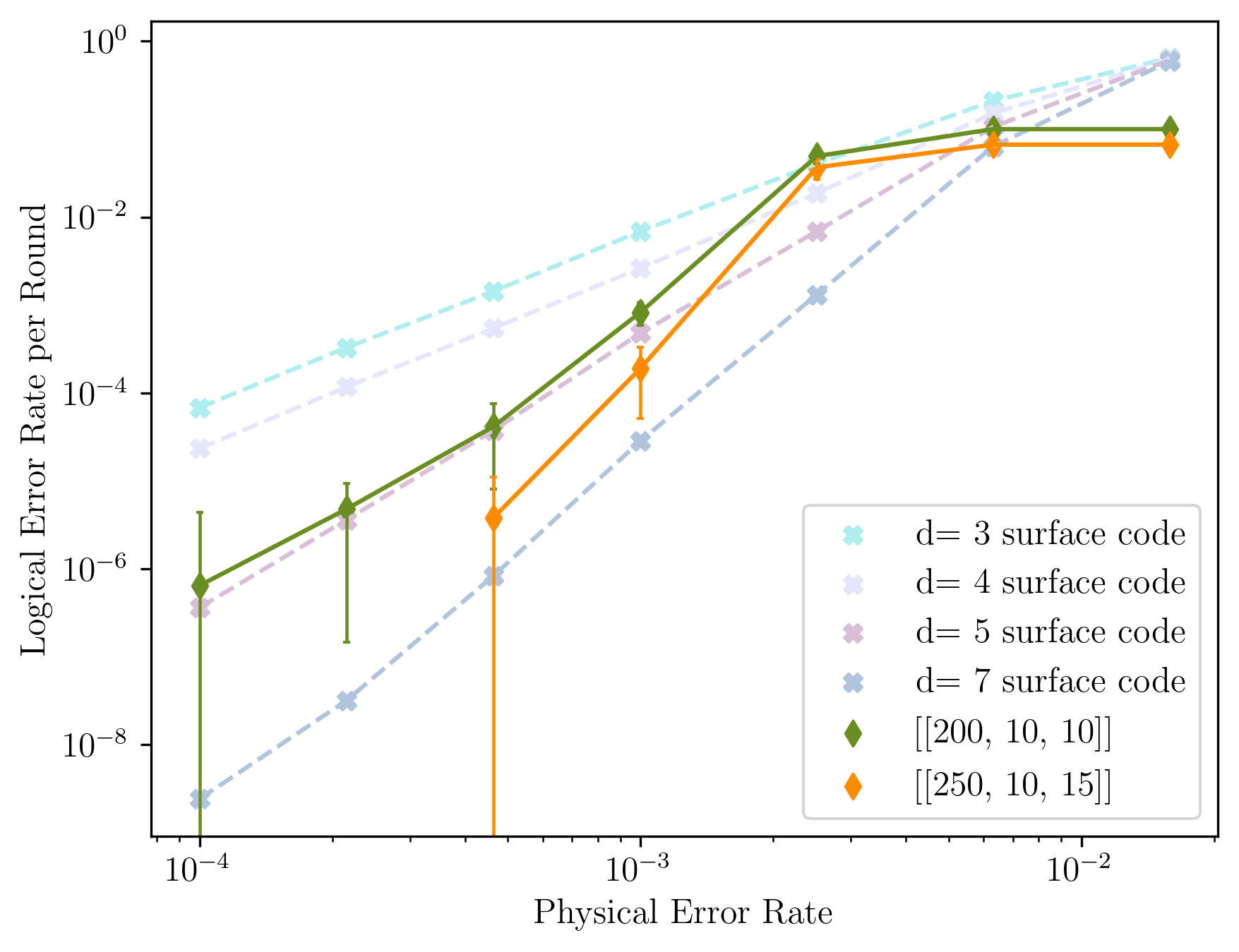}
    \caption{A comparison of the performances of quantum Tanner codes with 10 logical qubits and $k = 10$ copies of various surface codes of comparable distance. The logical error rates of the surface codes were computed via $p'_L = 1- (1-p_L)^k$ for $k  =10$, where $p_L$ is the logical error rate of a single surface code encoding one logical qubit.}
    \label{sc comparison}
\end{figure}

Figure \ref{sc comparison} shows a comparison of the performances of the quantum Tanner codes and surfaces codes of comparable distance. In order to facilitate a comparison with the $[[200, 10, 10]]$ and $[[250, 10, 15]]$ codes, the logical error rates of $k = 10$ copies of the surface codes were computed via
\begin{equation}
    p'_L = 1- (1-p_L)^k,
\end{equation}
where $p_L$ is the logical error rate of a single copy of the surface code encoding only a single logical qubit.

Pseudo-thresholds can be used to quantify the performance of the quantum Tanner codes in the presence of phenomenological and circuit level noise. The pseudo-threshold is the point at which the error rate of the logical qubits is equal to that of the physical qubits. Below this point, the system experiences sufficient suppression of logical errors in the presence of physical noise. In order to facilitate a comparison between physical and logical error rates, we examine the logical error rate per logical qubit. In the phenomenological case, we compute the pseudo-threshold as the physical error rate $p$ satisfying $p_L(p) = kp$, where $p_L(p)$ is the per-round logical error rate and $k$ represents the number of encoded qubits. The pseudo-thresholds of our quantum Tanner codes are reported in Table \ref{pseudo_thresholds}. Under phenomenological noise, these vary in the range $1.3\%$ to $6.3\%$. 
\bgroup

\def\arraystretch{1.05}
\begin{table}[!htb]
    \begin{center}
        \begin{tabular}{||c|c|c||}
            \hline
             $[[n, k, d]]$ &  \Centerstack{Pseudo-threshold \\ (phenomenological)} &  \Centerstack{Pseudo-threshold \\(circuit-level)}\tabularnewline[8pt]
             \hline 
             [[36, 8, 3]] &  0.0634 & 0.0038 \\
            \hline

            [[54, 11, 4]] & 0.0382 & 0.0056 \\
            \hline
            
            [[72, 14, 4]] &  0.0300& 0.0036 \\
            \hline

            [[200, 10, 10]] & 0.0198 & 0.0059 \\
            \hline
            
            [[250, 10, 15]] & 0.0133 & 0.0040 \\
            \hline
        \end{tabular}
    \end{center}
    \caption{Pseudo-thresholds represent the break even points where $p_L(p) = kp$ (phenomenological) or $p_L(p) = Tkp/10$ (circuit-level) for a physical error rate $p$, where $p_L$, $k$, and $T$ represent the logical error rate, number of logical qubits, and depth of a single round of syndrome extraction, respectively. Pseudo-thresholds are listed for the quantum Tanner codes for both the phenomenological and circuit-level noise settings. }
    \label{pseudo_thresholds}
\end{table}
\egroup

In the case of circuit-level noise, we consider the break even point $p_L(p) = Tkp/10$, where $T$ is the number of time steps, or depth, of a single round of syndrome extraction. We divide by 10 due to the fact that we consider idling errors with intensity $p/10$ in this error model. As can be seen in Table \ref{pseudo_thresholds}, the pseudo-thresholds of quantum Tanner codes in the circuit level model are lower than those in the phenomenological model by approximately one order of magnitude. Both sets of pseudo-thresholds remain competitive with those displayed by other qLDPC codes. These results suggest that quantum Tanner codes are able to suppress errors effectively on near-term devices consisting of around 200 physical qubits experiencing physical error rates on the order of $10^{-3}$. 

\subsection{Overhead Comparison}

We examine the space-time overheads of the explicit quantum Tanner codes generated here and compare these estimates to those incurred by surface codes which yield similar logical error rates under both phenomenological and circuit level noise. The space overhead is computed as the total number of physical qubits $n + n_{anc}$, where $n_{anc}$ represents the number of ancilla qubits required for the syndrome extraction circuit. In our case, this coincides with the number of stabilizers of the code, but can potentially be reduced in certain architectures, such as trapped ions~\cite{ye_trapped_ions}. The time overhead is taken as the  product of the depth of the syndrome extraction circuit and the number of syndrome extraction rounds. Let $d_x$ represent the maximal weight of any row or column in the $X$-parity check matrix $H_X$ and $d_z$ the maximal weight of any row or column in $H_Z$. Then the depth of a single round of syndrome extraction, in which we measure both sets of stabilizers, is bounded from above by $d_x + d_z$. As such, we can estimate the time overhead as the product $N(d_x + d_z)$, where we do $N = d$ rounds of syndrome extraction for a distance-$d$ code. Overall, we have
\begin{equation}
    O_{ST} = (n + n_{anc})(d_x + d_z)d.
\end{equation}

\bgroup
\def\arraystretch{1.3}
    \begin{table*}[ht]
    \begin{center}
        \begin{tabular}{|| c | c | c | c||}
            \hline 
             Code & Space-time overhead per logical qubit &   \shortstack{$p_L/k$ at $p = 10^{-3}$ \\ (phenomenological)} &  \shortstack{$p_L/k$ at $p = 10^{-3}$ \\ (circuit-level)}\\
            \hline
             [[36, 8, 3]] & 612  & $1.71\times 10^{-5}  \; (\pm 6 \times 10^{-7})$& $6.52\times 10^{-4}  \; (\pm 6 \times 10^{-6})$\\
            \hline
             [[54, 11, 4]] & 891 &$4.1\times 10^{-6}  \; (\pm 1 \times 10^{-7})$ & $2.38\times 10^{-4}  \; (\pm 3 \times 10^{-6})$\\
            \hline
             [[72, 14, 4]] & 933 & $1.12\times 10^{-5}  \; (\pm 1 \times 10^{-7})$ & $4.83\times 10^{-4}  \; (\pm 3 \times 10^{-6})$\\
            \hline
             [[200, 10, 10]] & 16\,464 & $1.00\times 10^{-7}  \; (\pm 8 \times 10^{-9})$ & $8.2\times 10^{-5}  \; (\pm 2 \times 10^{-6})$\\
            \hline
            [[250, 10, 15]] & 30\,870  & $5.3\times 10^{-8}  \; (\pm 5 \times 10^{-9})$ & $2.44\times 10^{-5}  \; (\pm 9 \times 10^{-7})$\\
            \hline
            \hline
            $d=3$ s.c. & 600 &  $3.0 \times 10^{-5}$& $6.9 \times 10^{-4}$ \\
            \hline
            $d = 4 $ s.c. & 1568 &  $5.0 \times 10^{-6}$& $2.6 \times 10^{-4}$\\
            \hline
            $d = 5$ s.c. & 3240 &  $8.0 \times 10^{-7}$ &  $4.8 \times 10^{-5}$ \\
            \hline
           
            $d = 7$ s.c. & 9464 &  $2.9 \times 10^{-9}$ & $2.9 \times 10^{-6}$\\
            \hline
            $d = 9$ s.c. & 20\,808 & $2 \times 10^{-10}$ & $1 \times 10^{-6}$ \\
            \hline
          
            $d = 15$ s.c. & 100\,920 & $2.3 \times 10^{-18*}$
\footnote[0]{*This value is an approximation based on a fitting of logical error rates computed at higher physical error rates.} &$2.2 \times 10^{-10}$\\
            \hline

        \end{tabular}
    \end{center}
    \caption{A comparison of the space-time overheads of quantum Tanner and surface codes of comparable distance and logical error rate per logical qubit. The surface codes have parameters $[[L^2, 1, L]]$. The value $p_L/k$ was computed using the BP+OSD decoder with $d$ rounds of syndrome extraction in the presence of phenomenological and circuit-level noise for both sets of codes.}
    \label{overheads}
\end{table*}
\egroup

Table \ref{overheads} summarizes the overhead findings and offers a comparison with surface codes of similar distance and performance. Quantum Tanner codes and distance $d$-surface codes are listed in the first column and space-time overheads per logical qubit are listed for both sets of codes in the second column. For smaller codes, quantum Tanner codes require up to 50\% less space-time overhead per logical qubit while achieving a level of error suppression comparable with their distance-3 and -4 surface code counterparts. At this scale, quantum Tanner codes represent a more efficient and equally effective alternative to the surface code. 

On the other hand, surface codes of larger distances clearly outperform the 200- and 250-qubit quantum Tanner codes in both the phenomenological and circuit-level noise settings. However, this is achieved with up to triple the space-time overhead, despite the high stabilizer weights of the $[[200, 10, 10]]$ and $[[250, 10, 15]]$ codes, making the quantum Tanner codes more efficient but less effective than surface codes in this regime. We note that these overheads may further be reduced by optimizing the syndrome extraction circuits and exploring single-shot error correction, a property which quantum Tanner codes have been proven to exhibit under adversarial noise \cite{Gu_single_shot}. 

\section{Conclusion and Future Work}

In this work, we have constructed a number of explicit instances of quantum Tanner codes, a family shown to have asymptotically good parameters. The codes were constructed using dihedral groups and random pairs of classical codes and exhibit high encoding rates and relative distances. A numerical analysis conducted with the BP+OSD decoding revealed relatively high pseudo-thresholds, both in the phenomenological and circuit level noise models, and good suppression of logical errors in the $p = 10^{-3}$ physical error rate regime. Small codes exhibited particularly low overheads when compared to surface codes, due in part to their low stabilizer weights. The overall performance of these codes suggests that they are well-suited to experimental implementation on near-term hardware with around 200 qubits and physical error rates around $p = 10^{-3}$. In particular, this includes trapped-ion-based hardware, which features high connectivity and low idling error rates and therefore lends itself well to the realization of quantum Tanner codes \cite{ye_trapped_ions}. 

Further research is needed to identify a broader class of codes with improved properties, and to explore the use of these codes in an end-to-end fault tolerant architecture for quantum computing. In order to find codes with better parameters and lower stabilizer weights, the search could be expanded to quantum Tanner codes constructed with other Frobenius groups and different classes of classical codes, or based on non-left-right-Cayley complexes, such as those in \cite{olai_generalizing}. 

In order to further reduce the overhead incurred by these codes, it would be of interest to explore optimizing syndrome extraction circuits and developing a scheme for single-shot decoding. Improving syndrome extraction circuits has the potential to reduce idling time and the propagation of errors and improve the time overhead required by the error-correcting codes by lowering the depth of the circuits. A scheme of particular interest can be found in Ref.~\cite{Tremblay_2022}. The time overhead incurred by quantum Tanner codes can also be reduced by taking advantage of their single shot property, which was proven for this code family in Ref.~\cite{Gu_single_shot}. A potential approach to this problem would be to develop an implementation of the decoder proposed by Leverrier and Zémor for quantum Tanner codes in Refs.~\cite{leverrier2022decodingquantumtannercodes} or \cite{leverrier2022efficientdecodingconstantfraction}. Finally, it would be of interest to explore logical gates on quantum Tanner codes, which could be accomplished by examining their automorphism groups as proposed in Ref.~\cite{sayginel2025faulttolerantlogicalcliffordgates}.

\section{Acknowledgments}
We would like to thank Oscar Higgott for many helpful discussions and support in the numerical work. RKR is grateful to Tom Scruby and Timo Hillmann for guidance in earlier stages of the project. This work is supported by the Australian Research Council via the Centre of Excellence in Engineered Quantum Systems (EQUS) project number CE170100009, and by the ARO through the QCISS program W911NF-21-1-0007 and IARPA ELQ program W911NF-23-2-0223. RKR is supported by the Sydney Quantum Academy.

\newpage

\bibliography{bibliography}

@article{cayley_graph,
 ISSN = {00029327, 10806377},
 URL = {http://www.jstor.org/stable/2369306},
 author = {Arthur Cayley},
 journal = {American Journal of Mathematics},
 number = {2},
 pages = {174--176},
 publisher = {Johns Hopkins University Press},
 title = {Desiderata and Suggestions: No. 2. The Theory of Groups: Graphical Representation},
 urldate = {2024-09-28},
 volume = {1},
 year = {1878}
}

@misc{dinur2021locallytestablecodesconstant,
      title={Locally Testable Codes with constant rate, distance, and locality}, 
      author={Irit Dinur and Shai Evra and Ron Livne and Alexander Lubotzky and Shahar Mozes},
      year={2021},
      eprint={2111.04808},
      archivePrefix={arXiv},
      primaryClass={cs.IT},
      url={https://arxiv.org/abs/2111.04808}, 
}

@article{LPS_ramanujan,
author = {Lubotzky, Alexander and Phillips, R. and Sarnak, P.},
year = {1988},
month = {09},
pages = {261-277},
title = {Ramanujan Graphs},
volume = {8},
journal = {Combinatorica},
doi = {10.1007/BF02126799}
}

@article{HIRANO,
   title={RAMANUJAN CAYLEY GRAPHS OF FROBENIUS GROUPS},
   volume={94},
   ISSN={1755-1633},
   url={http://dx.doi.org/10.1017/S0004972716000587},
   DOI={10.1017/s0004972716000587},
   number={3},
   journal={Bulletin of the Australian Mathematical Society},
   publisher={Cambridge University Press (CUP)},
   author={Hirano, Miki and Katata, Kohei and Yamasaki, Yoshinori},
   year={2016},
   eprint = {1503.04075},
   archivePrefix = {arXiv},
   month=sep, pages={373–383} }

@INPROCEEDINGS{leverrier2022quantumtannercodes,
  author={Leverrier, Anthony and Zémor, Gilles},
  booktitle={2022 IEEE 63rd Annual Symposium on Foundations of Computer Science (FOCS)}, 
  title={Quantum {T}anner codes}, 
  year={2022},
  eprint={2202.13641},
  archivePrefix={arXiv},
  volume={},
  number={},
  pages={872-883},
  doi={10.1109/FOCS54457.2022.00117}}

@manual{GAP4,
    organization = {The GAP~Group},
    title        = {GAP -- Groups, Algorithms, and Programming,
                    Version 4.13.1},
    year         = {2024},
    url          = {https://www.gap-system.org},
    }

@phdthesis{gottesman_thesis,
    author = {Daniel Gottesman},
    title = {Stabilizer Codes and Quantum Error Correction},
    school = {California Institute of Technology} ,
    year = {1997},
    eprint={quant-ph/9705052},
    archivePrefix={arXiv},
}

@article{Breuckmann_LDPC,
  title = {Quantum Low-Density Parity-Check Codes},
  author = {Breuckmann, Nikolas P. and Eberhardt, Jens Niklas},
  journal = {PRX Quantum},
  eprint = {2103.06309},
  archivePrefix={arXiv},
  volume = {2},
  issue = {4},
  pages = {040101},
  numpages = {19},
  year = {2021},
  month = {Oct},
  publisher = {American Physical Society},
  doi = {10.1103/PRXQuantum.2.040101},
  url = {https://link.aps.org/doi/10.1103/PRXQuantum.2.040101}
}

@misc{olai_generalizing,
      title={Generalizing Quantum {T}anner Codes}, 
      author={Olai A. Mostad and Eirik Rosnes and Hsuan-Yin Lin},
      year={2024},
      eprint={2405.07980},
      archivePrefix={arXiv},
      primaryClass={cs.IT},
      url={https://arxiv.org/abs/2405.07980}, 
}

@misc{panteleev2020,
      title={Asymptotically Good Quantum and Locally Testable Classical {LDPC} Codes}, 
      author={Pavel Panteleev and Gleb Kalachev},
      year={2022},
      eprint={2111.03654},
      archivePrefix={arXiv},
      primaryClass={cs.IT},
      url={https://arxiv.org/abs/2111.03654}, 
}

@article{Tillich_hgp,
   title={Quantum {LDPC} Codes With Positive Rate and Minimum Distance Proportional to the Square Root of the Blocklength},
   volume={60},
   ISSN={1557-9654},
   eprint = {0903.0566},
   archivePrefix={arXiv},
   url={http://dx.doi.org/10.1109/TIT.2013.2292061},
   DOI={10.1109/tit.2013.2292061},
   number={2},
   journal={IEEE Transactions on Information Theory},
   publisher={Institute of Electrical and Electronics Engineers (IEEE)},
   author={Tillich, Jean-Pierre and Zemor, Gilles},
   year={2014},
   month=feb, pages={1193–1202} }

@inproceedings{fiber_bundle, 
    series={STOC ’21},
   title={Fiber bundle codes: breaking the
            n
            1/2
            polylog(n) barrier for Quantum {LDPC} codes},
   url={http://dx.doi.org/10.1145/3406325.3451005},
   DOI={10.1145/3406325.3451005},
   eprint = {2009.03921},
   archivePrefix={arXiv},
   booktitle={Proceedings of the 53rd Annual ACM SIGACT Symposium on Theory of Computing},
   publisher={ACM},
   author={Hastings, Matthew B. and Haah, Jeongwan and O’Donnell, Ryan},
   year={2021},
   month=jun, pages={1276–1288},
   collection={STOC ’21} }

@article{lifted_product,
   title={Degenerate Quantum {LDPC} Codes With Good Finite Length Performance},
   volume={5},
   ISSN={2521-327X},
   url={http://dx.doi.org/10.22331/q-2021-11-22-585},
   DOI={10.22331/q-2021-11-22-585},
   journal={Quantum},
   eprint = {1904.02703},
   archivePrefix={arXiv},
   publisher={Verein zur Forderung des Open Access Publizierens in den Quantenwissenschaften},
   author={Panteleev, Pavel and Kalachev, Gleb},
   year={2021},
   month=nov, pages={585} }

@article{Balanced_prod,
   title={Balanced Product Quantum Codes},
   volume={67},
   ISSN={1557-9654},
   url={http://dx.doi.org/10.1109/TIT.2021.3097347},
   DOI={10.1109/tit.2021.3097347},
   number={10},
   eprint = {2012.09271},
   archivePrefix={arXiv},
   journal={IEEE Transactions on Information Theory},
   publisher={Institute of Electrical and Electronics Engineers (IEEE)},
   author={Breuckmann, Nikolas P. and Eberhardt, Jens N.},
   year={2021},
   month=oct, pages={6653–6674} }

@article{Bravyi_BB_codes,
   title={High-threshold and low-overhead fault-tolerant quantum memory},
   volume={627},
   ISSN={1476-4687},
   url={http://dx.doi.org/10.1038/s41586-024-07107-7},
   DOI={10.1038/s41586-024-07107-7},
   number={8005},
   journal={Nature},
   publisher={Springer Science and Business Media LLC},
   eprint = {2308.07915},
   archivePrefix={arXiv},
   author={Bravyi, Sergey and Cross, Andrew W. and Gambetta, Jay M. and Maslov, Dmitri and Rall, Patrick and Yoder, Theodore J.},
   year={2024},
   month=mar, pages={778–782} }

@article{Pryadko_qdistrnd,
   title={QDistRnd: A {GAP} package for computing the distance of
quantum error-correcting codes},
   volume={7},
   ISSN={2475-9066},
   url={http://dx.doi.org/10.21105/joss.04120},
   DOI={10.21105/joss.04120},
   number={71},
   journal={Journal of Open Source Software},
   publisher={The Open Journal},
   author={Pryadko, Leonid P. and Shabashov, Vadim A. and Kozin, Valerii K.},
   year={2022},
   eprint = {2308.15140},
   archivePrefix = {arXiv},
   month=mar, pages={4120} }

@misc{guemard2025moderatelengthliftedquantumtanner,
      title={Moderate-length lifted quantum {T}anner codes}, 
      author={Virgile Guemard and Gilles Zémor},
      year={2025},
      eprint={2502.20297},
      archivePrefix={arXiv},
      primaryClass={quant-ph},
      url={https://arxiv.org/abs/2502.20297}, 
}

@article{shaw,
  title = {Lowering Connectivity Requirements for Bivariate Bicycle Codes Using Morphing Circuits},
  author = {Shaw, Mackenzie H. and Terhal, Barbara M.},
  journal = {Phys. Rev. Lett.},
  volume = {134},
  issue = {9},
  pages = {090602},
  numpages = {5},
  year = {2025},
  month = {Mar},
  eprint = {2407.16336},
  archivePrefix = {arXiv},
  publisher = {American Physical Society},
  doi = {10.1103/PhysRevLett.134.090602},
  url = {https://link.aps.org/doi/10.1103/PhysRevLett.134.090602}
}

@article{roffe_BPOSD_package,
   title={Decoding across the quantum low-density parity-check code landscape},
   volume={2},
   ISSN={2643-1564},
   url={http://dx.doi.org/10.1103/PhysRevResearch.2.043423},
   DOI={10.1103/physrevresearch.2.043423},
   number={4},
   journal={Physical Review Research},
   publisher={American Physical Society (APS)},
   author={Roffe, Joschka and White, David R. and Burton, Simon and Campbell, Earl},
   eprint = {2005.07016},
   archivePrefix = {arXiv},
   year={2020},
   month={Dec}
}

@article{Gu_single_shot,
   title={Single-Shot Decoding of Good Quantum {LDPC} Codes},
   volume={405},
   ISSN={1432-0916},
   url={http://dx.doi.org/10.1007/s00220-024-04951-6},
   DOI={10.1007/s00220-024-04951-6},
   number={3},
   journal={Communications in Mathematical Physics},
   publisher={Springer Science and Business Media LLC},
   author={Gu, Shouzhen and Tang, Eugene and Caha, Libor and Choe, Shin Ho and He, Zhiyang and Kubica, Aleksander},
   year={2024},
   eprint = {2306.12470},
   archivePrefix = {arXiv},
   month=mar }

@article{Kovalev_GB_code,
  title = {Quantum Kronecker sum-product low-density parity-check codes with finite rate},
  author = {Kovalev, Alexey A. and Pryadko, Leonid P.},
  journal = {Phys. Rev. A},
  volume = {88},
  issue = {1},
  pages = {012311},
  numpages = {13},
  year = {2013},
  month = {Jul},
  publisher = {American Physical Society},
  doi = {10.1103/PhysRevA.88.012311},
  url = {https://link.aps.org/doi/10.1103/PhysRevA.88.012311}
}

@article{Conrad_dodeco,
   title={The small stellated dodecahedron code and friends},
   volume={376},
   ISSN={1471-2962},
   url={http://dx.doi.org/10.1098/rsta.2017.0323},
   DOI={10.1098/rsta.2017.0323},
   number={2123},
   journal={Philosophical Transactions of the Royal Society A: Mathematical, Physical and Engineering Sciences},
   publisher={The Royal Society},
   author={Conrad, J. and Chamberland, C. and Breuckmann, N. P. and Terhal, B. M.},
   year={2018},
   eprint = {1712.07666},
   archivePrefix  = {arXiv},
   month=may, pages={20170323} }

@misc{Roffe_LDPC_Python_tools_2022,
author = {Roffe, Joschka},
title = {{LDPC: Python tools for low density parity check codes}},
url = {https://pypi.org/project/ldpc/},
year = {2022}
}

@article{gidney2021stim,
  doi = {10.22331/q-2021-07-06-497},
  url = {https://doi.org/10.22331/q-2021-07-06-497},
  title = {Stim: a fast stabilizer circuit simulator},
  author = {Gidney, Craig},
  journal = {{Quantum}},
  issn = {2521-327X},
  publisher = {{Verein zur F{\"{o}}rderung des Open Access Publizierens
                in den Quantenwissenschaften}},
  volume = {5},
  pages = {497},
  month = {jul},
  year = {2021},
  eprint = {2103.02202},
  archivePrefix = {arXiv}
}

@misc{ye_trapped_ions,
      title={Quantum error correction for long chains of trapped ions}, 
      author={Min Ye and Nicolas Delfosse},
      year={2025},
      eprint={2503.22071},
      archivePrefix={arXiv},
      primaryClass={quant-ph},
      url={https://arxiv.org/abs/2503.22071}, 
}

@ARTICLE{Kschischang_bp,
  author={Kschischang, F.R. and Frey, B.J. and Loeliger, H.-A.},
  journal={IEEE Transactions on Information Theory}, 
  title={Factor graphs and the sum-product algorithm}, 
  year={2001},
  volume={47},
  number={2},
  pages={498-519},
  keywords={Graph theory},
  doi={10.1109/18.910572}}

@article{split_belief,
  doi = {10.22331/q-2018-10-19-102},
  url = {https://doi.org/10.22331/q-2018-10-19-102},
  title = {Multi-path {S}ummation for {D}ecoding 2{D} {T}opological {C}odes},
  author = {Criger, Ben and Ashraf, Imran},
  journal = {{Quantum}},
  issn = {2521-327X},
  publisher = {{Verein zur F{\"{o}}rderung des Open Access Publizierens in den Quantenwissenschaften}},
  volume = {2},
  pages = {102},
  eprint = {1709.02154},
  archivePrefix = {arXiv},
  month = oct,
  year = {2018}
}

@ARTICLE{osd_fossorier,
  author={Fossorier, M.P.C. and Shu Lin},
  journal={IEEE Transactions on Information Theory}, 
  title={Soft-decision decoding of linear block codes based on ordered statistics}, 
  year={1995},
  volume={41},
  number={5},
  pages={1379-1396},
  doi={10.1109/18.412683}}

@article{steane, volume={452},
    title = {Multiple Particle Interference and Quantum Error Correction},
   ISSN={1471-2946},
   url={http://dx.doi.org/10.1098/rspa.1996.0136},
   DOI={10.1098/rspa.1996.0136},
   number={1954},
   journal={Proceedings of the Royal Society of London. Series A: Mathematical, Physical and Engineering Sciences},
   publisher={The Royal Society},
   year={1996},
   author = {Steane, A.},
   eprint={9601029},
   archivePrefix={arXiv},
   month=nov, pages={2551–2577} }

@article{Tremblay_2022,
   title={Constant-Overhead Quantum Error Correction with Thin Planar Connectivity},
   volume={129},
   ISSN={1079-7114},
   url={http://dx.doi.org/10.1103/PhysRevLett.129.050504},
   DOI={10.1103/physrevlett.129.050504},
   number={5},
   journal={Physical Review Letters},
   publisher={American Physical Society (APS)},
   author={Tremblay, Maxime A. and Delfosse, Nicolas and Beverland, Michael E.},
   year={2022},
   eprint = {2109.14609},
   archivePrefix = {arXiv},
   month=jul }

@article{leverrier2022decodingquantumtannercodes,
author = {Leverrier, Anthony and Z\'{e}mor, Gilles},
title = {Decoding Quantum {T}anner Codes},
year = {2023},
issue_date = {Aug. 2023},
publisher = {IEEE Press},
volume = {69},
number = {8},
issn = {0018-9448},
url = {https://doi.org/10.1109/TIT.2023.3267945},
doi = {10.1109/TIT.2023.3267945},
journal = {IEEE Trans. Inf. Theor.},
eprint={2208.05537},
archivePrefix={arXiv},
month = aug,
pages = {5100–5115},
numpages = {16}
}

@misc{sayginel2025faulttolerantlogicalcliffordgates,
      title={Fault-Tolerant Logical Clifford Gates from Code Automorphisms}, 
      author={Hasan Sayginel and Stergios Koutsioumpas and Mark Webster and Abhishek Rajput and Dan E Browne},
      year={2025},
      eprint={2409.18175},
      archivePrefix={arXiv},
      primaryClass={quant-ph},
      url={https://arxiv.org/abs/2409.18175}, 
}

@article{shor_9qubit_code,
  title = {Scheme for reducing decoherence in quantum computer memory},
  author = {Shor, Peter W.},
  journal = {Phys. Rev. A},
  volume = {52},
  issue = {4},
  pages = {R2493--R2496},
  numpages = {0},
  year = {1995},
  month = {Oct},
  publisher = {American Physical Society},
  doi = {10.1103/PhysRevA.52.R2493},
  url = {https://link.aps.org/doi/10.1103/PhysRevA.52.R2493}
}

@article{Kitaev_1997,
doi = {10.1070/RM1997v052n06ABEH002155},
url = {https://dx.doi.org/10.1070/RM1997v052n06ABEH002155},
year = {1997},
month = {Dec},
publisher = {},
volume = {52},
number = {6},
pages = {1191},
author = {A Yu Kitaev},
title = {Quantum computations: algorithms and error correction},
journal = {Russian Mathematical Surveys}
}

@inbook{freedman_meyer_luo,
title = "Z2-systolic freedom and quantum codes",
author = "Freedman, \{Michael H.\} and Meyer, \{David A.\} and Feng Luo",
note = "Publisher Copyright: {\textcopyright} 2002 by Chapman \& Hall/CRC.",
year = "2002",
month = jan,
day = "1",
isbn = "1584882824",
pages = "287--320",
booktitle = "Mathematics of Quantum Computation",
publisher = "CRC Press",
doi = {https://doi.org/10.1201/9781420035377}
}

@article{lin2023quantumtwoblockgroupalgebra,
  title = {Quantum two-block group algebra codes},
  author = {Lin, Hsiang-Ku and Pryadko, Leonid P.},
  journal = {Phys. Rev. A},
  volume = {109},
  issue = {2},
  pages = {022407},
  numpages = {17},
  year = {2024},
  month = {Feb},
  eprint={2306.16400},
  archivePrefix={arXiv},
  publisher = {American Physical Society},
  doi = {10.1103/PhysRevA.109.022407},
  url = {https://link.aps.org/doi/10.1103/PhysRevA.109.022407}
}

@misc{scruby2024highthresholdlowoverheadsingleshotdecodable,
      title={High-threshold, low-overhead and single-shot decodable fault-tolerant quantum memory}, 
      author={Thomas R. Scruby and Timo Hillmann and Joschka Roffe},
      year={2024},
      eprint={2406.14445},
      archivePrefix={arXiv},
      primaryClass={quant-ph},
      url={https://arxiv.org/abs/2406.14445}, 
}

@article{higgot_hgp,
  title = {Improved Single-Shot Decoding of Higher-Dimensional Hypergraph-Product Codes},
  author = {Higgott, Oscar and Breuckmann, Nikolas P.},
  journal = {PRX Quantum},
  volume = {4},
  issue = {2},
  pages = {020332},
  numpages = {17},
  year = {2023},
  month = {May},
  publisher = {American Physical Society},
  eprint={2206.03122},
  archivePrefix={arXiv},  
  doi = {10.1103/PRXQuantum.4.020332},
  url = {https://link.aps.org/doi/10.1103/PRXQuantum.4.020332}
}

@misc{koukoulekidis2_gb,
      title={Small Quantum Codes from Algebraic Extensions of Generalized Bicycle Codes}, 
      author={Nikolaos Koukoulekidis and Fedor Šimkovic IV and Martin Leib and Francisco Revson Fernandes Pereira},
      year={2024},
      eprint={2401.07583},
      archivePrefix={arXiv},
      primaryClass={quant-ph},
      url={https://arxiv.org/abs/2401.07583}, 
}

@article{Breuckmann_londe,
   title={Single-Shot Decoding of Linear Rate {LDPC} Quantum Codes With High Performance},
   volume={68},
   ISSN={1557-9654},
   url={http://dx.doi.org/10.1109/TIT.2021.3122352},
   DOI={10.1109/tit.2021.3122352},
   number={1},
   eprint={2001.03568},
   archivePrefix={arXiv},
   journal={IEEE Transactions on Information Theory},
   publisher={Institute of Electrical and Electronics Engineers (IEEE)},
   author={Breuckmann, Nikolas P. and Londe, Vivien},
   year={2022},
   month=jan, pages={272–286} }

@misc{grospellier2019numericalstudyhypergraphproduct,
      title={Numerical study of hypergraph product codes}, 
      author={Antoine Grospellier and Anirudh Krishna},
      year={2019},
      eprint={1810.03681},
      archivePrefix={arXiv},
      primaryClass={quant-ph},
      url={https://arxiv.org/abs/1810.03681}, 
}

@misc{kalachev2023twosidedrobustlytestablecodes,
      title={Two-sided Robustly Testable Codes}, 
      author={Gleb Kalachev and Pavel Panteleev},
      year={2023},
      eprint={2206.09973},
      archivePrefix={arXiv},
      primaryClass={cs.IT},
      url={https://arxiv.org/abs/2206.09973}, 
}

@inproceedings{dinur2022goodquantumldpccodes,
author = {Dinur, Irit and Hsieh, Min-Hsiu and Lin, Ting-Chun and Vidick, Thomas},
title = {Good Quantum LDPC Codes with Linear Time Decoders},
year = {2023},
isbn = {9781450399135},
publisher = {Association for Computing Machinery},
address = {New York, NY, USA},
url = {https://doi.org/10.1145/3564246.3585101},
doi = {10.1145/3564246.3585101},
booktitle = {Proceedings of the 55th Annual ACM Symposium on Theory of Computing},
pages = {905–918},
eprint={2206.07750},
archivePrefix={arXiv},
numpages = {14},
location = {Orlando, FL, USA},
series = {STOC 2023}
}

@misc{leverrier2022efficientdecodingconstantfraction,
      title={Efficient decoding up to a constant fraction of the code length for asymptotically good quantum codes}, 
      author={Anthony Leverrier and Gilles Zémor},
      year={2022},
      eprint={2206.07571},
      archivePrefix={arXiv},
      primaryClass={quant-ph},
      url={https://arxiv.org/abs/2206.07571}, 
}

@article{google_surface_code,
   title={Quantum error correction below the surface code threshold},
   volume={638},
   ISSN={1476-4687},
   url={http://dx.doi.org/10.1038/s41586-024-08449-y},
   DOI={10.1038/s41586-024-08449-y},
   number={8052},
   journal={Nature},
   eprint = {2408.13687},
   archivePrefix={arXiv},
   publisher={Springer Science and Business Media LLC},
   author={Rajeev Acharya and Dmitry A. Abanin and Laleh Aghababaie-Beni and Igor Aleiner and others},
   year={2024},
   month=dec, pages={920–926} }

@article{Zhao_2022,
   title={Realization of an Error-Correcting Surface Code with Superconducting Qubits},
   volume={129},
   ISSN={1079-7114},
   url={http://dx.doi.org/10.1103/PhysRevLett.129.030501},
   DOI={10.1103/physrevlett.129.030501},
   number={3},
   eprint = {2112.13505},
   archivePrefix={arXiv},
   journal={Physical Review Letters},
   publisher={American Physical Society (APS)},
   author={Zhao, Youwei and Ye, Yangsen and Huang, He-Liang and Zhang, Yiming and Wu, Dachao and Guan, Huijie and Zhu, Qingling and Wei, Zuolin and He, Tan and Cao, Sirui and Chen, Fusheng and Chung, Tung-Hsun and Deng, Hui and Fan, Daojin and Gong, Ming and Guo, Cheng and Guo, Shaojun and Han, Lianchen and Li, Na and Li, Shaowei and Li, Yuan and Liang, Futian and Lin, Jin and Qian, Haoran and Rong, Hao and Su, Hong and Sun, Lihua and Wang, Shiyu and Wu, Yulin and Xu, Yu and Ying, Chong and Yu, Jiale and Zha, Chen and Zhang, Kaili and Huo, Yong-Heng and Lu, Chao-Yang and Peng, Cheng-Zhi and Zhu, Xiaobo and Pan, Jian-Wei },
   year={2022},
   month={jul }}

@article{Krinner_2022,
   title={Realizing repeated quantum error correction in a distance-three surface code},
   volume={605},
   ISSN={1476-4687},
   url={http://dx.doi.org/10.1038/s41586-022-04566-8},
   DOI={10.1038/s41586-022-04566-8},
   number={7911},
   journal={Nature},
   eprint={2112.03708},
   archivePrefix={arXiv},
   publisher={Springer Science and Business Media LLC},
   author={Krinner, Sebastian and Lacroix, Nathan and Remm, Ants and Di Paolo, Agustin and Genois, Elie and Leroux, Catherine and Hellings, Christoph and Lazar, Stefania and Swiadek, Francois and Herrmann, Johannes and Norris, Graham J. and Andersen, Christian Kraglund and Müller, Markus and Blais, Alexandre and Eichler, Christopher and Wallraff, Andreas},
   year={2022},
   month=may, pages={669–674} }

@article{Weyl1912,
author = {Weyl, H.},
journal = {Mathematische Annalen},
pages = {441-479},
title = {Das asymptotische Verteilungsgesetz der Eigenwerte linearer partieller Differentialgleichungen (mit einer Anwendung auf die Theorie der Hohlraumstrahlung)},
url = {http://eudml.org/doc/158545},
volume = {71},
year = {1912},
}

\appendix 

\section{Expansion properties of left-right Cayley complexes}\label{Appendix0}

We examine more closely the spectral expansion properties of left-right Cayley complexes in terms of related Cayley graphs. Let $G$ be a group and let $A, B \subset G$ be sets of generators of $G$. Further, let $Cay(G, A)$ and $Cay(G, B)$ be Cayley graphs based on the left action of $A$ and the right action of $B$, respectively, on elements of $G$. Then the second largest eigenvalue of the left-right Cayley complex, when viewed as a graph, is closely related to the minimal second largest eigenvalue of $Cay(G, A)$ and $Cay(G, B)$.

\begin{lemma}
    Let $G$ be a finite group and let $A, B \subset G$ such that $\langle A, B \rangle = G$ and $|A| = |B| = \Delta$. Let $Cay(G, A)$ and $Cay(G, B)$ denote the Cayley graphs based on left and right actions on $G$, respectively. Let $\lambda_2^A$ and $\lambda_2^B$ denote the second largest eigenvalues of their respective adjacency matrices. Let $\Gamma(G, A,B)$ denote the left-right Cayley complex defined as in Def. \ref{def_LRCC}. Then we have
    \begin{equation*}
        \lambda_2^{LRCC} \leq \Delta + \min\{\lambda_2^a, \lambda_2^b\},
    \end{equation*}
    where $\lambda_2^{LRCC}$ denotes the second largest eigenvalue of $\Gamma(G, A, B)$. 
\end{lemma}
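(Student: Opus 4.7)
The plan is to express the adjacency matrix of $\Gamma(G,A,B)$, viewed as a bipartite graph, in block form and then reduce the claim to a singular-value perturbation inequality. Ordering the vertices so that $V_0$ precedes $V_1$, the adjacency matrix has the bipartite shape
$$\mathcal{A}_{\mathrm{LRCC}} = \begin{pmatrix} 0 & M \\ M^{T} & 0 \end{pmatrix},$$
whose nonzero eigenvalues come in $\pm$ pairs equal to the singular values of the $|G|\times|G|$ biadjacency block $M$. Hence $\lambda_2^{\mathrm{LRCC}} = \sigma_2(M)$, once $\lambda_2$ is read as the second largest eigenvalue in absolute terms.

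Next I would identify $M$ directly from Definition~\ref{def_LRCC}. The entry $M_{g,h}$ counts the edges from $g_0$ to $h_1$, which splits into a contribution from $E_A$ (an edge iff $hg^{-1}\in A$) and a contribution from $E_B$ (an edge iff $g^{-1}h\in B$). Therefore
$$M = M_A + M_B,$$
where $M_A$ and $M_B$ are the adjacency matrices of $Cay(G,A)$ under left multiplication and of $Cay(G,B)$ under right multiplication, respectively. Because $A=A^{-1}$ and $B=B^{-1}$, both $M_A$ and $M_B$ are symmetric $\Delta$-regular adjacency matrices, so their singular values are the absolute values of their eigenvalues; in particular $\sigma_1(M_A)=\sigma_1(M_B)=\Delta$ and $\sigma_2(M_A)=\lambda_2^A$, $\sigma_2(M_B)=\lambda_2^B$.

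The last step is to invoke the Ky Fan / Weyl inequality $\sigma_{i+j-1}(X+Y)\le \sigma_i(X)+\sigma_j(Y)$ with $X=M_A$, $Y=M_B$, and the two complementary choices $(i,j)=(1,2)$ and $(i,j)=(2,1)$. These give $\sigma_2(M)\le \Delta+\lambda_2^B$ and $\sigma_2(M)\le \lambda_2^A+\Delta$; taking the minimum of the two bounds then yields the stated inequality $\lambda_2^{\mathrm{LRCC}}\le \Delta+\min\{\lambda_2^A,\lambda_2^B\}$.

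The only substantive step is the decomposition $M=M_A+M_B$; everything downstream is a textbook application of Weyl's inequality. I therefore expect the main obstacle to be careful bookkeeping, specifically verifying that the left-multiplication edges $E_A$ populate $M$ exactly as the adjacency matrix of $Cay(G,A)$ while the right-multiplication edges $E_B$ populate it as that of $Cay(G,B)$, together with consistency about the $\lambda_2$ convention, since in degenerate cases (for instance if $Cay(G,A)$ happens to be disconnected or bipartite, which is possible because only $\langle A, B\rangle = G$ is assumed) the inequality still holds but may degrade to the trivial $2\Delta$.
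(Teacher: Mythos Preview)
Your proposal is correct and mirrors the paper's argument: the paper introduces the auxiliary graph $X(G,A,B)$ with adjacency matrix $C=\mathbf{A}_A+\mathbf{A}_B$, observes that $\Gamma(G,A,B)$ is its bipartite double cover (so $\mathrm{spec}(\mathbf{A}^\Gamma)=\{\pm\lambda:\lambda\in\mathrm{spec}(C)\}$), and then applies Weyl's inequality $\lambda_{i+j-1}^C\le\lambda_i^a+\lambda_j^b$ for Hermitian matrices. Your route via singular values of the biadjacency block and the Ky Fan inequality is the same content, since $M=C$ is symmetric; the only cosmetic difference is that the paper phrases the key step in terms of eigenvalues and invokes the TNC condition to note that $\mathbf{A}_A$ and $\mathbf{A}_B$ have disjoint supports (so $C$ really is the adjacency matrix of a simple graph), a point you fold into your bookkeeping remark.
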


\begin{proof}
    Let $\mathbf{A}_A$ and $\mathbf{A}_B$ denote the adjacency matrices of $Cay(G, A)$ and $Cay(G, B)$, respectively. Then $\text{spec}(\mathbf{A}_A) = \{\lambda_1^a, \lambda_2^a, \ldots, \lambda_n^a \} $ and $\text{spec}(\mathbf{A}_B) = \{\lambda_1^b, \lambda_2^b, \ldots, \lambda_n^b \} $, where $\Delta  = \lambda_1^m > \lambda_2^m > \ldots > \lambda_n^m$ for $m \in \{a, b\}$ and $n = |G|$. 

    Consider the graph $X(G, A, B)$ which has vertices corresponding to a single copy of the elements of $G$ and edges of the form $(g, ag)$ and $(g, gb)$. The adjacency matrix of $X(G, A, B)$ is then $\mathbf{C} = \mathbf{A}_A + \mathbf{A}_B$. Note that $\mathbf{A}_A$ and $\mathbf{A}_B$ have no overlapping entries due to the total non-conjugacy condition. Due to Weyl \cite{Weyl1912}, we have 
    \begin{equation*}
        \lambda_{i + j -1}^C \leq \lambda_{i}^a + \lambda_j^b,
    \end{equation*}
    since $\mathbf{A}_A$ and $\mathbf{A}_B$ are both Hermitian as symmetric matrices containing only real entries. This, in turn, means that 
    \begin{equation*}
        \lambda_2^C \leq \min \{\lambda_1^a + \lambda_2^b, \lambda_2^a + \lambda_1^b \} = \Delta + \min \{ \lambda_2^a, \lambda_2^b \}.
    \end{equation*}
    Finally, as the bipartite double cover of $X(G, A, B)$, the adjacency matrix of $\Gamma(G, A, B)$ is of the form 
    \begin{equation*}
        \mathbf{A}^{\Gamma} = \begin{pmatrix}
            0 & C \\ C & 0
        \end{pmatrix}.
    \end{equation*}
It is easy to show that $\text{spec}(\mathbf{A}^{\Gamma}) = \{\pm \lambda  \; |\; \lambda \in \text{spec}(C) \}$, so we can conclude that 
    \begin{equation*}
        \lambda_2^{LRCC} \leq \Delta + \min \{ \lambda_2^a, \lambda_2^b \}.
    \end{equation*}
\end{proof}

\section{Restrictions on $\Delta$ values for the construction of LRCCs}\label{AppendixA}

There are a number of general and group-specific limitations on the $\Delta$ values which can be used to construct left-right Cayly complexes that fulfill the total non-conjugacy condition from Def. \ref{tnc def}. In particular, LRCCs constructed based on dihedral groups $D_n$ with $n$ odd and odd values of $\Delta$ violate the TNC condition.

\begin{lemma}
    Let $G$ be a group and let $A, B \subset G$ such that $\langle A, B \rangle = G$ and $|A| = |B| = \Delta$. If $A \cap B \neq \emptyset $ then the LRCC $\Gamma(G, A, B)$ does not satisfy the total non-conjugacy condition.
\end{lemma}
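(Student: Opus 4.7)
The plan is to give a direct constructive counterexample. The total non-conjugacy condition from Definition~\ref{tnc def} requires that for every $a\in A$, $b\in B$ and $g\in G$ one has $ag\neq gb$. To show TNC fails, it suffices to exhibit a single triple $(a,b,g)$ that violates this inequality.

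First, I would use the hypothesis $A\cap B\neq\emptyset$ to pick an element $s\in A\cap B$. I can then set $a:=s\in A$ and $b:=s\in B$, and choose $g:=e$, the identity element of $G$. This makes sense because $e\in G$ always, and the fact that $s$ belongs to both $A$ and $B$ is exactly what lets me use the same element on both sides of the equation $ag=gb$.

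Next, I would compute both sides explicitly: $ag = s\cdot e = s$ and $gb = e\cdot s = s$, so $ag=gb=s$. This is a direct violation of the TNC, so $\Gamma(G,A,B)$ does not satisfy it. No further work is required, and the conclusion follows.

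There is really no main obstacle here; the statement is essentially a triviality once one unpacks the definitions. The only thing worth remarking on is that one does not need to worry about whether the chosen $s$ is the identity or satisfies any special property, since the argument uses $s$ only as an arbitrary common element of $A$ and $B$, and the identity is always available as a valid choice of $g\in G$. The proof is therefore a two-line direct verification.
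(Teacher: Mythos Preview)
Your proposal is correct and essentially identical to the paper's proof: both pick a common element $s\in A\cap B$, set $a=b=s$, and take $g=e$ to obtain $ae=s=eb$, violating the TNC. There is nothing to add.
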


\begin{proof}
    We have $A \cap B \neq \emptyset \implies \exists a \in A, b \in B : a = b \implies ae = a = b = be$
    for $e \in G$ the identity element. This violates the TNC condition. 
\end{proof}

\begin{corollary}
    In order to generate a valid LRCC which fulfills the TNC, we must have $\Delta < |G|/2$. 
\end{corollary}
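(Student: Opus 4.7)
The plan is to derive the corollary directly from the preceding lemma via a pigeonhole argument. The preceding lemma asserts that if $A \cap B \neq \emptyset$, then $\Gamma(G, A, B)$ violates the total non-conjugacy condition. Taking the contrapositive, fulfillment of TNC forces $A \cap B = \emptyset$, so it suffices to argue that disjointness of $A$ and $B$ combined with $|A| = |B| = \Delta$ yields the bound $\Delta < |G|/2$.

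The crucial observation I would invoke is the convention set out in the paper's review of Cayley graphs, where the generating set is taken not to contain the identity element; this carries over to the sets $A$ and $B$ defining the left-right Cayley complex. Consequently $A, B \subseteq G \setminus \{e\}$, and therefore $|A \cup B| \leq |G| - 1$.

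From disjointness, $|A \cup B| = |A| + |B| = 2\Delta$, and combining with the previous bound gives $2\Delta \leq |G| - 1 < |G|$, that is, $\Delta < |G|/2$, as claimed. I would present this as a short two-line chain of inequalities after invoking the lemma.

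There is essentially no serious obstacle here; the only subtlety to flag explicitly is the implicit assumption that $e \notin A$ and $e \notin B$, inherited from the Cayley graph convention, because without it one would only obtain the weaker inequality $\Delta \leq |G|/2$. Everything else reduces to elementary counting, so I would keep the write-up short and simply cite the preceding lemma for the conjugacy half of the argument.
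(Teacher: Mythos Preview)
Your proposal is correct and uses the same two ingredients as the paper's proof: the preceding lemma (forcing $A \cap B = \emptyset$) and the Cayley-graph convention $e \notin A, B$, combined via pigeonhole. The paper's own write-up splits into the cases $\Delta = |G|/2$ and $\Delta > |G|/2$ rather than your single chain $2\Delta \leq |G|-1$, but the content is identical.
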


\begin{proof}
    If $\Delta = |G|/2$, then $e$ is contained in either $A$ or $B$, which does not yield a valid Cayley graph $X(G,A)$ or $X(G, B)$, respectively. If $\Delta > |G|/2$, then $A \cap B \neq \emptyset $ and the TNC condition is violated according to the lemma above. 
\end{proof}

\begin{lemma}
    Let $n \in \mathbb{N}$ and $\Delta \in [|D_n|]$ be odd and let $A, B \subset D_n$ be symmetric such that $\langle A, B \rangle = D_n$. Then the LRCC $\Gamma(G, A, B)$ does not satisfy the total non-conjugacy condition.
\end{lemma}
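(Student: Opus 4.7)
The plan is to show that the oddness of $\Delta$ together with the symmetry of $A$ and $B$ forces each of $A$ and $B$ to contain a reflection, and then to exhibit an explicit conjugator in $D_n$ relating any pair of such reflections, violating the TNC.

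First I would pin down which elements must lie in $A$ and $B$. Since $A = A^{-1}$, the map $x \mapsto x^{-1}$ partitions $A$ into fixed points (involutions) and pairs of distinct inverses. Assuming the standard convention that $e \notin A$ (otherwise $X(G,A)$ has a loop at every vertex and is not considered a Cayley graph), the fixed points of this involution on $A$ are precisely the order-$2$ elements contained in $A$. As $|A| = \Delta$ is odd, the number of such involutions in $A$ is odd, and in particular at least one. Now I would invoke the structure of $D_n$ for $n$ odd: the non-trivial rotations $r^k$ have orders dividing $n$, which is odd, so none of them is an involution, and hence every non-identity involution in $D_n$ is a reflection. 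Therefore $A$ contains a reflection $a = sr^i$; by the same argument $B$ contains a reflection $b = sr^j$.

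Next I would produce a $g \in D_n$ with $ag = gb$. The natural candidate is a rotation, $g = r^k$, using the dihedral relation $r^k s = s r^{-k}$. A short computation gives
\begin{equation*}
ag = sr^i r^k = sr^{i+k}, \qquad gb = r^k s r^j = s r^{-k} r^j = s r^{j-k},
\end{equation*}
so the equation $ag = gb$ reduces to $2k \equiv j - i \pmod{n}$. Because $n$ is odd, $\gcd(2,n) = 1$, so $2$ is invertible modulo $n$ and this congruence has a unique solution $k \in \mathbb{Z}/n\mathbb{Z}$. Taking $g = r^k$ for this $k$ produces the pair $(a,b,g)$ witnessing the failure of TNC.

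The only mildly subtle step is the parity argument showing that $A$ (and $B$) must contain a reflection; once that is in place the rest is a direct dihedral calculation. The role of the $n$-odd hypothesis is twofold: it ensures that the non-identity involutions of $D_n$ are exactly the reflections (so that an odd symmetric set is forced to contain one), and it makes $2$ invertible modulo $n$ so that the congruence $2k \equiv j-i \pmod n$ always has a solution. If either of these conditions fails, the argument breaks, which is consistent with the restriction of the lemma to odd $n$ and odd $\Delta$.
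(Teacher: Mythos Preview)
Your proof is correct and follows the same skeleton as the paper's: odd $\Delta$ plus symmetry forces a self-inverse element in each of $A$ and $B$, for odd $n$ such an element must be a reflection, and any two reflections in $D_n$ with $n$ odd are conjugate. The one substantive difference is in the last step: the paper appeals to Sylow's theorem (the reflections generate the Sylow $2$-subgroups, hence are all conjugate), while you produce the conjugator explicitly by solving $2k \equiv j-i \pmod{n}$. Your route is more elementary and self-contained, and you are also more careful than the paper in explaining why the self-inverse element cannot be the identity or a rotation; the paper's proof glosses over this and also contains a typo ($\forall g$ where $\exists g$ is meant).
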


\begin{proof}
    Since $A$ and $B$ are symmetric and of odd cardinality, there exist an element $a \in A$ such that $a = a^{-1}$ and $b \in B$ such that $b = b^{-1}$. Since $a$ and $b$ are elements of the dihedral group $D_n$, they are reflections. As a consequence of the Sylow theorem, $n$ odd implies that all reflections are conjugate to one another, meaning
    \begin{equation*}
        g^{-1}ag = b \; \forall g \in G,
    \end{equation*}
    which violates the TNC condition. 
\end{proof}

\section{Construction Details}\label{appendixB}
\begin{table*}[ht!]
\begin{center}
\begin{tabular}{|| c | c | c| c | c | c | c ||}
 \hline
$[[n, k, d]]$ & Group & $\Delta$ & $A$ & $B$& $C_A$ & $C_B$ \\
\hline

[[36, 8, 3]] & $ D_4$ & 3 & $\{s, r, r^3 \} $& $\{sr, sr^3, r^2 \}$ & $\begin{bmatrix}
   1 & 1 & 1
\end{bmatrix}$  & $\begin{bmatrix}
    
     1 & 0 & 0 \\ 1 & 1 & 1
\end{bmatrix} $  \\
\hline

[[54, 11, 4]] & $D_6$ & 3&  $\{r, r^3, r^5 \}$ & $\{ sr^2, sr^4, sr^5 \}$   &  
$\begin{bmatrix}
   1 & 1 & 1
\end{bmatrix}$  & $\begin{bmatrix}
     1 & 0 & 0 \\ 1 & 1 & 1
\end{bmatrix} $  \\
\hline

[[72, 14, 4]] & $D_8$ & 3 & $\{s, sr^4, r^4 \}$& $\{sr, sr^3, sr^7 \}$  &
$\begin{bmatrix}
   1 & 1 & 1
\end{bmatrix}$  & $\begin{bmatrix}
    
     1 & 0 & 0 \\ 1 & 1 & 1
\end{bmatrix} $  \\
\hline

[[200, 10, 10]] & $D_8$& 5 & $\{sr^6, r, r^3, r^5, r^7\}$ & $\{ sr, sr^3, sr^7, r^2, r^6\}$ & $\begin{bmatrix}
    1 & 0 & 0 & 1 & 1 \\
    0 & 0& 1 &1 & 1 \\
\end{bmatrix}$&  

$\begin{bmatrix}
    1 & 0& 0& 0 & 1\\
    0 & 0 & 0 & 1  &1 \\
    1 & 1& 1& 0 & 0\\
\end{bmatrix}$\\
\hline

[[250, 10, 15]] & $D_{10}$ & 5 & $\{sr, r, r^3, r^7, r^9, \}$ & $ \{sr^6, r^2, r^4,  r^6, r^8 \}$&

$\begin{bmatrix}
    1& 0& 0& 1 & 1\\
    0 & 1& 1 & 1 & 1 \\
\end{bmatrix}$&

$\begin{bmatrix}
    1 & 0 & 0 & 0 & 1\\
    0 & 0& 0 & 1 & 1 \\
    1 & 0 & 1 & 0 & 1
\end{bmatrix}$

\\
\hline

\end{tabular}
\end{center}
\caption{Explicit instances of quantum Tanner codes are listed along with the specifications of their constructions. We provide the sets of generators $A$ and $B$, whose elements are given in terms of the standard representation of dihedral groups $D_n = \langle r, s \; | \; s^2 = r^n = e, srs = r^{-1} \rangle$. The last two columns contain the generator matrices of the classical codes $C_A$ and $C_B$, respectively, which were generated randomly. }
\label{construction_params}
\end{table*}

We provide the specific generating sets and classical code pairs for the quantum Tanner code constructions in Table \ref{params}. Thousands of random instances were generated, and the best instances, in terms of encoding rate and distance, are listed here. We use the standard representation of dihedral groups, where $D_n = \langle r, s \; | \; s^2 = r^n = e, srs = r^{-1} \rangle$. The symmetric sets of generators, $A$ and $B$, as well as the resulting left-right-Cayley complexes were generated using $\texttt{GAP}$ \cite{GAP4}. 

The classical codes are given by their parity check matrices, whose rows represent parity checks and whose columns correspond to bits. These were generated randomly using \texttt{Julia}. The quantum Tanner codes were then constructed in \texttt{Julia} using the \texttt{LinearAlgebra} and \texttt{AbstractAlgebra} packages. Distances were computed and verified using the \texttt{LDPC} Python package and the \texttt{GAP} package \texttt{QDistRnd}.

\end{document}